\newtheorem{lemma}{Lemma}[section]
\newtheorem{theorem}[lemma]{Theorem}
\newtheorem{corollary}[lemma]{Corollary}
\newcommand{\eps}{\epsilon}
\newcommand{\etal}{et al.\xspace}
\newcommand{\cF}{\mathcal{F}}
\newcommand{\cI}{\mathcal{I}}
\newcommand{\lpp}{\mathsf{LP}_\mathsf{primal}}
\newcommand{\lpd}{\mathsf{LP}_\mathsf{dual}}
\newcommand{\initOneLiners}{%
    \setlength{\itemsep}{0pt}
    \setlength{\parsep }{0pt}
    \setlength{\topsep }{0pt}
}
\newcommand{\cost}{\textsc{cost}}
\begin{document}
\sloppy

\title{
A Tight Approximation for Co-flow Scheduling for Minimizing Total Weighted Completion Time}

\author{Sungjin Im\thanks{ Electrical Engineering and Computer Science, University of California, 5200 N. Lake Road, Merced CA 95344. {\tt sim3@ucmerced.edu}. }  \and Manish Purohit\thanks{Google, Mountain View, CA} 
}

\date{}
\maketitle
\thispagestyle{empty}

\begin{abstract}
Co-flows model a modern scheduling setting that is commonly found in a variety of applications in distributed and cloud computing. In co-flow scheduling, there are $m$ input ports and $m$ output ports. Each co-flow $j \in J$ can be represented by a bipartite graph between the input and output ports, where each edge $(i,o)$ with demand $d_{i,o}^j$ means that $d_{i,o}^j$ units of packets must be delivered from port $i$ to port $o$. To complete co-flow $j$, we must satisfy all of its demands. Due to capacity constraints, a port can only transmit (or receive) one unit of data in unit time. A feasible schedule at each time $t$ must therefore be a bipartite matching. 

We consider co-flow scheduling and seek to optimize the popular objective of total weighted completion time. Our main result is a $(2+\eps)$-approximation for this problem, which is essentially tight, as the problem is hard to approximate within a factor of $(2 - \eps)$. This improves upon the previous best known 4-approximation. Further, our result holds even when jobs have release times without any loss in the approximation guarantee. The key idea of our approach is to construct a continuous-time schedule using a configuration linear program and interpret each job's completion time therein as the job's deadline. The continuous-time schedule serves as a witness schedule meeting the discovered deadlines, which allows us to reduce the problem to a deadline-constrained scheduling problem. 

\bigskip
\noindent
\textbf{Authors' note:} \emph{ This paper has a bug. The bug is in Section 4.2 Finding a Feasible Integral Schedule Meeting Deadlines -- an integral flow to the flow network we created is not necessarily a valid schedule of the given co-flows. In fact, the following problem is known to be NP-hard: Suppose we are given a bipartite graph where each edge has a certain deadline. At each time, we can schedule a subset of edges if they form a matching. An edge completes when it appears in a matching. The goal is to determine if there exists a feasible schedule that completes all edges before their deadline. Since we already lose a factor two in the approximation ratio in the first step, our approach cannot yield a 2-approximation. However, it is plausible that one can get a better than 4-approximation by finding an alternative and correct rounding that replaces our second rounding in Section 4.2.} 

\emph{We note that although the separation oracle has an similar issue, it is not a big deal, since we can simply use $y_{j,t}$, instead of $y_{j}^F$, which has the meaning that co-flow $j$ completes at time $t$ if $y_{j, t} =1$. }

\emph{We plan to keep this arxiv paper accessible for a while, as some colleagues showed interests in this approach despites the above flaw. We hope that our approach, albeit flawed, can help find a better than 4-approximation for the problem. This manuscript has not been published anywhere.}

\end{abstract}

\clearpage
\setcounter{page}{1}

\section{Introduction}
	\label{sec:intro}

Co-flow scheduling~\cite{chowdhury2012coflow} is an elegant scheduling model that abstracts a signature scheduling problem that characterizes modern massively parallel computing platforms such as MapReduce \cite{dean2008mapreduce} and Spark \cite{zaharia2010spark}. Such platforms have a unique processing pattern that interleaves local computation with communication across machines. Due to the size of the large data sets processed, communication typically tends to be the main bottleneck in the system performance of these platforms, and the co-flow model captures the key scheduling challenge arising from such communication.

In co-flow scheduling, there are $m$ input ports and $m$ output ports. Each job (or co-flow) $j$ is described by a bipartite graph between input and output ports where each edge $(i, o)$ is associated with a demand $d_{i,o}^j$, meaning the job has $d_{i,o}^j$ unit-sized packets to send from input port $i$ to output port $j$. Job $j$ completes at the earliest time when all its packets are delivered. We assume that time is slotted and at each integer time $t$, a feasible schedule is a matching, and exactly one packet of a certain co-flow is delivered from input port $i$ to output port $o$ when $i$ is matched to $o$. At a high-level, a packet delivery is data migration from a machine $i$ to another machine $o$ before starting the next round of local computation. Co-flow scheduling can also be viewed as a generalization of the classical concurrent open shop scheduling problem~\cite{bansal2010inapproximability,chen2007supply,garg2007order,leung2007scheduling,mastrolilli2010minimizing,sachdeva2013optimal,wang2007customer}, as we can recover the concurrent open shop when $d_{i,o}^j = 0$ for all jobs $j$, and pairs $i \neq o$ of ports. 

Since co-flow scheduling is an abstraction of one-round communication which is potentially part of multiple rounds of computation, some packets become ready to be transferred only later in the schedule. This necessitates the study of co-flow scheduling with packet release times. Thus, a job $j$ may have a release/arrival time, $r_j$, meaning that no packet of job $j$ can be transported before its release time. 
More generally, we can consider a different release time for each packet of a job and all our results extend naturally to this setting. For simplicity, we assume that all packets of job $j$ have the same release time $r_j$.

In this paper we study co-flow scheduling to minimize the (weighted) average\ /\ total completion time, one of the most popular objectives used to measure average job latency. Not surprisingly, being at the heart of modern parallel computing, co-flow scheduling has been actively studied by both the system and theory communities \cite{ahmadi2017scheduling,chowdhury2012coflow,chowdhury2015efficient,chowdhury2014efficient,khuller2016brief,luo2016towards,qiu2015minimizing,JahanjouKR16,shafiee2017improved,zhao2015rapier}, particularly for the completion time objective \cite{ahmadi2017scheduling,JahanjouKR16,khuller2017select,khuller2016brief,luo2016towards,qiu2015minimizing,shafiee2017improved}. Since even the special case of concurrent open shop scheduling is hard to approximate within factor $2 - \eps$ \cite{bansal2010inapproximability,sachdeva2013optimal}, there has been a sequence of attempts to give approximate scheduling algorithms. Prior to this work, the best known approximation ratio were 5 for the general case, and 4 when all jobs arrive at time 0 \cite{ahmadi2017scheduling,shafiee2017improved}, respectively.
Thus, it remained open to reduce the gap between best upper and lower bounds. 

\subsection{Our Result}

In this paper, we answer the open problem in the affirmative by giving an essentially tight approximation algorithm.

\begin{theorem}
\label{thm:main}
	There is a deterministic 2-approximation algorithm for co-flow scheduling for minimizing total weighted completion time when all parameters such as demands and release times are polynomially bounded by the input size. For the general case, there is a $(2+\eps)$-approximation for any constant $\eps >0$.
\end{theorem}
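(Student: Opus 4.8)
The plan is to follow the roadmap hinted at in the abstract: use a configuration LP to produce a fractional continuous-time schedule, extract from it a deadline $\bar C_j$ for each co-flow $j$, and then solve a deadline-constrained co-flow scheduling problem losing only a factor essentially $2$. First I would write down the natural configuration LP $\lpunweighted$ whose variables $x_{S,t}$ indicate, for each time $t$ and each matching (configuration) $S$ between input and output ports, that $S$ is used at time $t$; the constraints enforce that each co-flow's demand on every edge $(i,o)$ is met, that at most one configuration is chosen per time slot, and that nothing is scheduled before $r_j$. A fractional completion time $\tilde C_j = \sum_t t \cdot (\text{fraction of } j\text{ finishing at } t)$ is defined, and we solve $\min \sum_j w_j \tilde C_j$. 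Although this LP has exponentially many configuration variables, it can be solved to within $(1+\eps)$ via the ellipsoid method with an (approximate) separation oracle for the dual — the separation problem is a max-weight bipartite matching, which is polynomial; when parameters are polynomially bounded the time horizon is polynomial and we get an exact solve. This gives $\sum_j w_j \tilde C_j \le \opt$.

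Next I would set the deadline of co-flow $j$ to be $\bar C_j := 2\tilde C_j$ (or $(2+\eps)\tilde C_j$ in the general case), so that $\sum_j w_j \bar C_j \le 2\,\opt$. The crux is then a \emph{rounding / scheduling} step: given the fractional continuous-time solution, build an \emph{integral} schedule in which every co-flow $j$ actually completes by its deadline $\bar C_j$. The key structural fact to exploit is that the fractional schedule, when restricted to the window $[r_j, \tilde C_j]$, already routes at least half of co-flow $j$'s total demand on average before time $\tilde C_j$ (a Markov-type averaging argument on the completion-time contribution), hence by time $2\tilde C_j$ a constant fraction — we need \emph{all} of it, so more care is needed: instead, I would argue that the fractional mass of $j$ scheduled by time $\bar C_j = 2\tilde C_j$ is large enough, and then feed the per-edge ``deadline demands'' into a deadline-respecting integral co-flow scheduler. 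Such a scheduler exists because a fractional matching schedule meeting deadlines can be decomposed and rounded: one reduces to a bipartite degree-constrained subgraph / flow problem over the time-expanded network and uses integrality of the bipartite (b-)matching polytope, possibly with the Birkhoff–von Neumann theorem applied slot by slot, to obtain an integral schedule that respects all the deadlines without violating port capacities.

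The main obstacle I anticipate is precisely this deadline-feasibility step: converting ``$\tilde C_j$ is the fractional completion time'' into ``there is an integral schedule finishing $j$ by $2\tilde C_j$'' is not immediate, because the fractional schedule may spread a co-flow's demand thinly over a long horizon, so merely scaling the deadline by $2$ does not obviously capture enough mass, and rounding a fractional matching schedule to an integral one can shift individual packet completions. I expect the resolution to hinge on two ingredients: (i) a \emph{witness schedule} argument — the continuous-time LP solution itself, after truncation at $\tilde C_j$ and time-scaling, serves as a feasible fractional solution to the deadline-constrained LP with deadlines $2\tilde C_j$, which follows from a clean exchange/averaging argument; and (ii) an \emph{integrality} theorem for deadline-constrained co-flow scheduling showing that any feasible fractional schedule can be made integral without violating deadlines, proved via flow/matching integrality on the time-expanded graph. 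Assembling (i) and (ii) yields an integral schedule with $\sum_j w_j C_j \le \sum_j w_j \bar C_j \le 2\,\opt$, and the $(2+\eps)$ version for general parameters comes from geometric grouping of the time horizon into $O(\log_{1+\eps}(\cdot))$ intervals so that the LP and the rounding both run in polynomial time, at the cost of an extra $(1+\eps)$ factor.
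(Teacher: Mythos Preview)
Your outline has the right global architecture (LP $\to$ per-job deadlines $\to$ deadline-constrained integral schedule via flow integrality), and you correctly flag the deadline-feasibility step as the crux. But there are two genuine gaps between your plan and a working proof.

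First, your LP is not the one that makes the argument go through. With variables $x_{S,t}$ for a matching $S$ at time $t$, there is no linear way to express ``fraction of co-flow $j$ that finishes at time $t$'': completion of $j$ is the \emph{conjunction} over all edges $(i,o)$ that $d_{io}^j$ units have been sent, and this cannot be read off linearly from the $x_{S,t}$'s. The paper instead takes configurations to be \emph{complete schedules of a single co-flow $j$ in isolation}: a variable $y_j^F$ for each feasible schedule $F\in\cF(j)$, with objective $\sum_j w_j \sum_F C_j^F y_j^F$, so the fractional completion time sits directly in the objective. Correspondingly, the dual separation oracle is not max-weight matching but a min-cost flow (given time-varying port prices $\beta_{i,t},\gamma_{o,t}$, find a cheapest single-job schedule meeting a fixed deadline, on a time-expanded network).

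Second, and more seriously, the step ``set $\bar C_j = 2\tilde C_j$ and truncate/time-scale to get a witness'' is precisely where a naive argument collapses to the old factor~$4$. Markov only tells you that at least half of $j$'s mass is finished by time $\tilde C_j(1/2)\le 2\tilde C_j$; turning a half-completed schedule into a full one by stretching doubles the horizon again. Truncating different jobs at different times $\tilde C_j$ and then applying one global time-scaling is not a coherent operation on the schedule. The paper's resolution is \emph{randomized slow-motion}: draw $\lambda\in(0,1]$ with density $f(v)=2v$ and stretch the continuous schedule $\sigma$ by $1/\lambda$. In the stretched schedule every job $j$ is fully done by $\tilde C_j(\lambda)/\lambda$, and with this density $\Ex[\tilde C_j(\lambda)/\lambda]=\int_0^1 (\tilde C_j(v)/v)\cdot 2v\,dv = 2\int_0^1 \tilde C_j(v)\,dv$. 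Thus the deadlines $C_j^*:=\lceil \tilde C_j(\lambda)/\lambda\rceil$ are \emph{certifiably feasible} (the stretched $\sigma$ is the witness) and satisfy $\Ex[\sum_j w_j C_j^*]\le 2\,\cost(\lpp)$; a $-\tfrac12\sum_j w_j$ gain in passing from $\lpp$ to $\sigma$ exactly pays for the $+\sum_j w_j$ from the ceiling. The choice of $\lambda$ is derandomized by observing that $\cost(\sigma^\lambda)$ is piecewise linear in $1/\lambda$ with polynomially many breakpoints. Your ingredient~(ii), integrality of the deadline-constrained problem via max-flow on the time-expanded graph, is correct and matches the paper, but it only becomes useful once you have deadlines that are simultaneously feasible \emph{and} $2$-bounded --- and that requires the random-$\lambda$ stretching, not the deterministic doubling you propose.
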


This approximation guarantee is almost tight as there is an inapproximability result of $2- \eps$~\cite{bansal2010inapproximability,sachdeva2013optimal}. 
Thus, our result essentially closes the approximability of co-flow scheduling for the completion time minimization objective. 
Interestingly, we have the same approximation guarantee for  general co-flow scheduling with arbitrary release times as for the special case when all jobs are released at time 0. 

\medskip
\noindent
\emph{Other Extensions.} As mentioned before, our result can be easily extended to handle the case where packets of the same job have different release times. As pointed out in the seminal paper \cite{chowdhury2012coflow}, in practice, ports can have non-uniform capacities, meaning that each port can route a different number of packets in each time slot. Our algorithm easily generalizes to handle ports with non-uniform capacities (see Section~\ref{sec:extensions} for a sketch).
\medskip

Interestingly, even for the classical concurrent open shop scheduling with release times, the previous best known approximation factor was 3~\cite{ahmadi2017scheduling,garg2007order,leung2007scheduling}.
Theorem \ref{thm:main} immediately yields an improved approximation algorithm for \emph{preemptive} concurrent open shop with arbitrary release times.

\begin{corollary}
\label{cor:concurrent}
There is a deterministic 2-approximation algorithm for \emph{preemptive} concurrent open shop scheduling with release times when all input parameters such as processing times and release times are polynomially bounded. For the general case, there is a $(2 + \eps)$-approximation algorithm for any constant $\eps > 0$.

\end{corollary}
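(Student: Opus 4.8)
The plan is to obtain Corollary~\ref{cor:concurrent} as an immediate consequence of Theorem~\ref{thm:main} via a direct reduction from preemptive concurrent open shop to co-flow scheduling. Given a preemptive concurrent open shop instance with $m$ machines, a set $J$ of jobs where job $j$ requires $p_{i,j} \ge 0$ units of processing on machine $i$, has weight $w_j$ and release time $r_j$, I would build a co-flow instance on $m$ input ports and $m$ output ports by placing all demand on the diagonal: set $d_{i,i}^j = p_{i,j}$ for every machine $i$ and job $j$, and $d_{i,o}^j = 0$ whenever $i \ne o$; weights and release times are copied verbatim. This is exactly the embedding noted in the introduction that exhibits co-flow scheduling as a generalization of concurrent open shop.

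Next I would verify that feasible schedules correspond. In the co-flow instance every nonzero demand sits on a diagonal edge $(i,i)$, so any feasible matching at a time step can be assumed without loss of generality to use only diagonal edges (deleting a non-diagonal edge from the matching never removes a useful transmission and only relaxes the matching constraint). A matching consisting of diagonal edges is precisely a choice, for each machine $i$, of at most one job to advance by one unit at that time step --- i.e., a feasible step of a preemptive concurrent open shop schedule. Under this bijection, job $j$ finishes in the co-flow schedule exactly when all of its diagonal demands $d_{i,i}^j = p_{i,j}$ have been met, which is exactly when every machine has finished job $j$; hence completion times, and therefore the total weighted completion time $\sum_j w_j C_j$, are identical for corresponding schedules. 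In particular an $\alpha$-approximate co-flow schedule maps back to an $\alpha$-approximate preemptive concurrent open shop schedule.

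Finally I would invoke Theorem~\ref{thm:main} on the constructed instance. When the processing times $p_{i,j}$ and release times $r_j$ are polynomially bounded in the input size, the demands $d_{i,i}^j$ and the release times of the co-flow instance are as well, so the deterministic $2$-approximation applies; otherwise the $(2+\eps)$-approximation applies for any constant $\eps > 0$. Translating the returned schedule back through the bijection above proves the corollary.

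There is essentially no hard step here; the only points that require a moment of care are (i) the harmless restriction of co-flow matchings to diagonal edges, so that a feasible co-flow step is exactly a feasible concurrent open shop step, and (ii) the observation that the co-flow model inherently allows a job's packets to be transmitted in non-contiguous time slots, so it is the \emph{preemptive} variant of concurrent open shop that this reduction captures.
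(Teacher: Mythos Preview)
Your proposal is correct and matches the paper's approach exactly. The paper does not give a separate proof of Corollary~\ref{cor:concurrent}; it simply notes that co-flow scheduling generalizes concurrent open shop via the diagonal embedding $d_{i,o}^j = 0$ for $i \neq o$ and states that Theorem~\ref{thm:main} ``immediately yields'' the corollary---your write-up spells out precisely this reduction, including the two points (diagonal matchings and the inherently preemptive nature of co-flow) that make the correspondence go through.
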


\subsection{Our Techniques}
    \label{sec:techniques}
We first discuss approaches used in the previous work \cite{qiu2015minimizing,ahmadi2017scheduling}\footnote{The result by Luo \etal \cite{luo2016towards} is unfortunately flawed as pointed out in \cite{ahmadi2017scheduling}.} along with their limitations. We then give a high-level overview of our approach and highlight our key ideas that enable an essentially tight approximation. To streamline our discussion, let's assume for a while that all jobs arrive at time 0. 
    
\paragraph{Previous Approaches.} We first discuss the work by Qiu \etal~\cite{qiu2015minimizing}, which gave the first non-trivial approximation for the completion time objective. A key observation therein is that, individually, each job $j$ can be scheduled within $\Delta(j)$ time units if there were no other jobs where $\Delta(j)$ denotes the maximum degree of the bipartite graph representing job $j$. 
This idea is easily extended to multiple jobs by aggregating them into one job. More precisely, a subset $\mathcal{S}$ of jobs can be completed  within $\Delta(\mathcal{S})$ time steps where $\Delta(\mathcal{S})$ is the maximum degree of the bipartite graph that adds up all demands of each edge over the jobs in $\mathcal{S}$. 
They set up an interval-indexed linear programming relaxation to find a tentative completion time for each job. 
While rounding, jobs with similar tentative completion times are grouped together and scheduled separately in $\Delta(S)$ time steps (where $S$ denotes the set of jobs grouped together). The disjoint schedules for each group are then concatenated sequentially. Since jobs in a group $S$ are scheduled independently of other groups so that all jobs finish by time $\Delta(S)$, their approach can be viewed as a \emph{reduction to the makespan objective} where jobs have a \emph{uniform deadline}. Unfortunately, such an approach cannot give a tight approximation as a result of concatenation of disjoint schedules. Using these techniques, Qiu \etal\ give a randomized 16.54-approximation algorithm.

The work by Ahmadi \etal~\cite{ahmadi2017scheduling} giving the best known 4-approximation improves upon Qiu \etal's result by a \emph{reduction to the Concurrent Open Shop problem} (COS). In the COS problem, there  are $m$ machines. Each job has a certain workload on each machine and completes when all its workload has been completed. Thus, the COS does not have to cope with 
the complexities coming from the bipartite graph structure as in co-flow scheduling. Their LP relaxation adapts the LP relaxation for COS by pretending that each port is an individual machine. Their approach loses a factor of 2 in the rounding just as the LP rounding for the COS problem does. An additional factor 2 loss follows since the same rounding for the COS does not yield a feasible schedule for co-flow as the LP was set up without fully capturing the interaction between input and output ports. 
Their approach crucially uses \emph{non-uniform tentative completion times} for the improvement upon the result by Qiu \etal~\cite{qiu2015minimizing}, but fails to give a tight approximation by \emph{failing to effectively capture the underlying graph structure}.

\paragraph{Our Approach.} To get a tight approximation, we seek to use non-uniform deadlines as in \cite{ahmadi2017scheduling} but at the same time ensure that we fully factor in the underlying graph structure. At a high level, our approach can be summarized as \emph{a true reduction of the problem of minimizing total completion time to a deadline-constrained scheduling problem} in the following sense: We find a tentative completion time / deadline $C^*_j$ for each job $j$ such that \emph{(i) it is guaranteed that there exists a schedule where every job $j$ completes before its deadline $C^*_j$, and (ii) any deadline-meeting schedule is 2-approximate for the total completion time objective}. 

Towards this end, we use a \emph{configuration LP} where we create a variable for each job's each possible schedule pretending that no other jobs exist; note that each variable completely determines the job's schedule, particularly its completion time. The configuration LP has exponentially many variables but can be solved by solving its dual using a separation oracle. The separation oracle boils down to finding the cheapest schedule for a given job when ports are priced differently over time, which can be solved using network flow. The configuration LP gives a fractional solution with a nice property: when a job completes by $\lambda$ fraction by time $t$, $\lambda$ fraction of the job's complete schedules are packed by that time $t$. Using this property, one can find a fractional schedule with deadlines $C^*_j$ giving the desired properties, (i) and (ii)---in particular, unit fraction of each job's schedules is packed by $C^*_j$, which implies a witness fractional flow guaranteeing (i). Then, we again use network flow to find an actual schedule.  

Our approach is readily extended to factor in job release times without sacrificing the approximation quality. In fact, thanks to our simple and general approach,  we can easily extend our result to capture different release times even at the packet level as well as to capture non-uniform port capacities. We believe our high-level approach that determines `tight' non-uniform deadlines for jobs admitting a feasible schedule by constructing a witness fractional schedule, could find more applications in other scheduling contexts.

\subsection{Other Related Work}

As mentioned before, the Concurrent Open Shop problem (COS) \cite{bansal2010inapproximability,sachdeva2013optimal,garg2007order,wang2007customer,mastrolilli2010minimizing,chen2007supply,leung2007scheduling} is a special case of the co-flow scheduling problem. However, in the COS, preemption is typically disallowed, meaning that any task started on a machine must be processed until its completion without interruption once it gets started. It is easy to see that preemption does not help if all jobs arrive at time 0, in which case several 2-approximation algorithms were shown  \cite{chen2007supply,garg2007order,leung2007scheduling} via LP rounding, which were later shown to be tight \cite{bansal2010inapproximability,sachdeva2013optimal}. When jobs have different release times, the same LP relaxations yielded 3-approximations~\cite{garg2007order,leung2007scheduling}. Later, Mastrolilli \etal~\cite{mastrolilli2010minimizing} gave a simple greedy algorithm that matches the best approximation ratio when all jobs arrive at time 0. Recently, Ahmadi \etal \cite{ahmadi2017scheduling} gave a combinatorial 3-approximation via a primal-dual analysis when jobs have non-uniform release times.

\section{Problem Definition and Notation}
	\label{sec:prelim}
    
    Following the approach taken by all prior work~\cite{ahmadi2017scheduling,chowdhury2014efficient,qiu2015minimizing}, we abstract out the network as a single $m \times m$ non-blocking switch with unit capacity constraints, i.e., any input (or output) port can only transmit (or receive) one unit of data at any time slot. Let $m$ denote the number of input and output ports in the system and let $J$ denote the set of all co-flows. As jobs are the more commonly used terminology in scheduling literature, we may refer to co-flows as jobs. Each job $j \in J$ is represented as a bipartite graph $G^j$ between the input and output ports where each edge $(i,o)$ has a weight $d_{io}^j$ that represents the number of unit sized packets that co-flow $j$ needs to deliver from port $i$ to port $o$. 
Each job $j$ also has a weight $w_j$ that indicates its relative importance and a release time $r_j$. A co-flow $j$ is available to be scheduled at its release time $r_j$ and is said to be completed 
when all the flows of the job $j$ have been scheduled. More formally, the completion time $C_j$ of co-flow $j$ is defined as the earliest time such that for every input port $i$
and output port $o$, $d^j_{io}$ units of its data have been transferred from port $i$ to port $o$. We assume that time is slotted and data transfer within the network is instantaneous. Time slots are indexed by positive integers. By time slot $t$, we refer to time interval $[t-1, t)$ of unit length. Since each input port $i$ can transmit at most one unit
of data and each output port $o$ can receive at most one unit of data in each time slot, a feasible schedule for a single time slot can be described as a matching. Our goal is to find a feasible schedule that minimizes the total, weighted completion time of the co-flows, i.e. $\text{minimize} \sum_{j} w_jC_j$.
\section{Linear Programming}
	\label{sec:lp}

\subsection{A Simplifying Assumption}
\label{sec:simplifying}

In this section, we present our configuration linear program. To make our presentation more transparent, we will first assume that the maximum potential completion time $T$ of all jobs is polynomially bounded by the input size. Note that this simplifying assumption holds true when all parameters, particularly job demands and release times are polynomially bounded. The main challenges still remain the same under this assumption, which will be removed in  Section~\ref{sec:remove}. 

\subsection{LP Formulation}

We formulate a configuration linear program. For a job $j$, a configuration is a complete feasible, integral schedule for $j$. More formally, a configuration $F$ for job $j$ is a ternary relation of tuples of the form $(i,o,t)$ that indicates one unit of data is transferred from input port $i$ to output port $o$ at time $t$. Since $F$ is a feasible configuration for $j$, by definition, the relation must satisfy:

\begin{enumerate}
	\item $|\{(i,o,t) \in F\}| = d^j_{io}$ for all $i, o \in [m]$.
	\item  $|\{(i,o,t) \in F\}| \leq 1$ for each pair of $i \in [m]$ and $t \geq 1$. 
	\item  $|\{(i,o,t) \in F\}| \leq 1$ for each pair of $o \in [m]$ and $t \geq 1$. 
\end{enumerate}

Let $\mathcal{F}(j)$ denote the set of all possible feasible schedules of job $j$. Let $C_j^F$ denote the completion time of job $j$ under schedule $F \in \mathcal{F}(j)$. In other words $C_j^F = \max \{t \ | \ \exists i \in [m], o \in [m]\ s.t.\ (i,o,t) \in F\}$. We introduce variables $\{y_j^F\}$ to indicate whether job $j$ is scheduled as per configuration $F$. The following shows our linear programming formulation, which we refer to as $\lpp$.

\begin{align}
  &{\bf \lpp:}&&\hspace{-1cm}\min \sum_{j \in J} w_j \sum_{F \in \mathcal{F}(j)} C_j^F y_j^F    \nonumber\\
\text{subject to, }\quad 
  & \forall j \in J, && \displaystyle \sum_{F \in \mathcal{F}(j)} y_j^F \geq 1  \label{eqn:LPassignment}\\
  & \forall i \in [m] \textnormal{ and } \forall t \in [T], && \displaystyle \sum_{j \in J, o \in [m]} \sum_{\substack{F \in \mathcal{F}(j) \\  s.t. (i,o,t) \in F}} y_j^F \leq 1 \label{eqn:LPinputconflict}\\
  & \forall o \in [m] \textnormal{ and } \forall t \in [T], && \displaystyle \sum_{j \in J, i \in [m]} \sum_{\substack{F \in \mathcal{F}(j) \\ s.t. (i,o,t) \in F}} y_j^F \leq 1 \label{eqn:LPoutputconflict} \\
 & \forall j \in J \textnormal{ and } \forall F \in \mathcal{F}(j), && \displaystyle y_j^F \geq 0 \label{eqn:LPrelax}
\end{align}

In the objective, each job $j$'s weighted completion time is $w_j C^F_j$ under configuration $F \in \cF(j)$.
Constraint \eqref{eqn:LPassignment} ensures that every co-flow is scheduled as per some configuration. Constraints \eqref{eqn:LPinputconflict} and \eqref{eqn:LPoutputconflict} ensure that at any time step $t$ at most one co-flow uses an input port $i$ or output port $o$. In constraint \eqref{eqn:LPrelax}, we relax the integrality requirements to get a linear programming relaxation. 
\subsection{Solving the Linear Program}
	\label{sec:oracle}
	
The linear programming relaxation $\lpp$ has exponentially many variables but polynomially many constraints (under the simplifying assumption that $T$ is polynomially bounded). To solve $\lpp$, we formulate its dual, $\lpd$, which has polynomially many variables but exponentially many constraints, and demonstrate a polynomial-time separation oracle for $\lpd$. Using the Ellipsoid method with the separation oracle, we can solve $\lpd$ in polynomial time. During the execution of the Ellipsoid method, only polynomially many constraints of the dual are considered, thus, as a result of the strong duality theorem, we only need to consider the corresponding variables to optimally solve $\lpp$. Therefore, a polynomial-time separation oracle for $\lpd$ implies that $\lpp$ can be solved in polynomial time.

	The following shows the dual of $\lpp$. 

\begin{align}
  &{\bf \lpd:}&&\hspace{-1cm}\max \sum_{j \in J} \alpha_j  - \sum_{i \in [m], t \in [T]} \beta_{i,t} - \sum_{o \in [m], t \in [T]} \gamma_{o,t}  \nonumber  \\
\text{subject to, }\quad 
  & \forall j \in J \textnormal{ and } \forall F \in \mathcal{F}(j), && \displaystyle \alpha_j - \sum_{(i,o,t) \in F}  (\beta_{i,t} + \gamma_{o,t}) \leq w_j C_j^F  \label{eqn:LPdual}\\
 &  && \displaystyle \alpha, \beta, \gamma \geq 0 \nonumber \label{eqn:LPdualpositive} 
\end{align}

It now remains to show a separation oracle for $\lpd$. Our task is, given $\alpha_j, \beta_{i,t} \textnormal{ and } \gamma_{o,t}$ values, to find a job $j$ and a configuration $F \in \mathcal{F}(j)$ that violates constraints \eqref{eqn:LPdual} or certify that no such constraint exists. We observe that such a separation oracle is reduced to solving the following key problem.

\medskip
\noindent 
\textbf{Key Problem for the Separation Oracle:} 
Given a single job $j$ and a deadline $C_j \in [T]$, find a feasible, integral schedule for $j$ of minimum cost when input port $i$ is priced at $\beta_{i,t}$ at time $t$ and output port $o$ is priced at $\gamma_{o,t}$ at the same time.

\smallskip
The observation immediately follows by considering a fixed job $j$ and a fixed completion time $C_j \in [T]$\ ---\ if the cost of the job's minimum-cost deadline-constrained schedule $F^*$ is less expensive than $\alpha_j -  w_j C_j^{F^*}$, then we report the corresponding configuration as a violated constraint. Otherwise, if the minimum cost schedules for all deadlines satisfy the constraints, then
we are guaranteed that no constraints are violated. Suppose for contradiction that some configuration $F$ violates constraint \eqref{eqn:LPdual}. Let $F^*$ be the minimum cost schedule with deadline $C_j^F$. Thus, we have
$cost(F^*) \leq cost(F) < \alpha_j - w_j C_j^F \leq \alpha_j - w_j C_j^{F^*}$, which is a contradiction.

We show how to reduce this key problem to an instance of the classical minimum cost flow problem. Create a directed graph $G$ as follows. For each $t \in [C_j] \setminus [r_j]$, create a complete, bipartite, directed subgraph $G_t = (I_t, O_t, A_t)$ where part $I_t$ has $m$ nodes corresponding to the input ports, $O_t$ has $m$ nodes corresponding to the output ports. For an edge $e = (i_t, o_t)$, assign a cost of $\beta_{i,t} + \gamma_{o,t}$.
Assign a capacity of one for all \emph{edges and all vertices}\footnote{Vertex capacities can be modeled by splitting the a vertex into two and adding a unit capacity edge connecting the split parts.} in $G_t$. 
In addition, for every pair $i,o \in [m]$, create vertices $source_{io}$ and $sink_{io}$. Connect $source_{io}$ to vertices $i_t$ for all $t \in [C_j] \setminus [r_j]$ and connect vertices $o_t$ for all $t \in [C_j] \setminus [r_j]$ to $sink_{io}$ with unit capacity and zero cost arcs. Finally, add two new vertices $source$ and $sink$. For every pair $i,o \in [m]$, add an arc of capacity $d_{io}^j$ and cost zero from $source$ to $source_{io}$ and also from $sink_{io}$ to $sink$. See Figure \ref{fig:oracle} for an illustration of this construction. 

\begin{figure}[htbp]
\includegraphics[width=\textwidth]{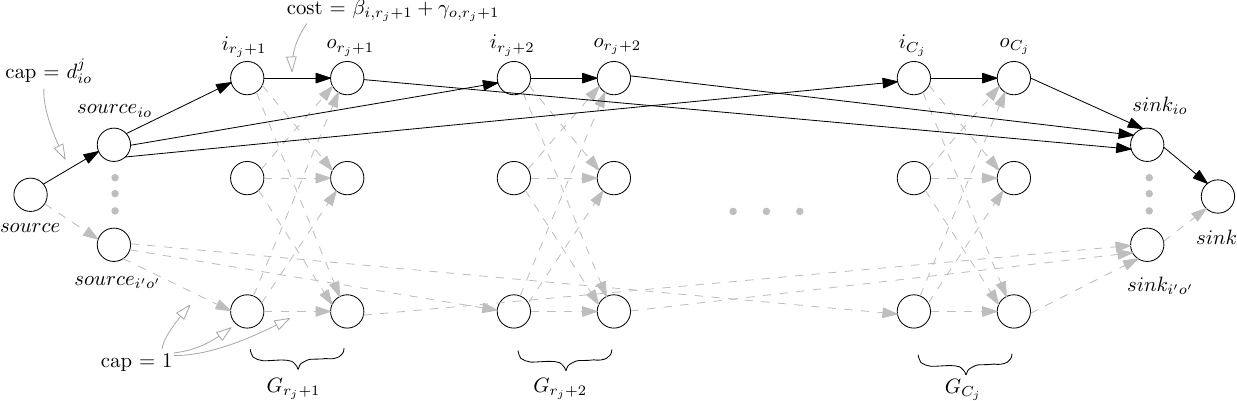}
\caption{An example of graph construction for the minimum cost flow problem}
\label{fig:oracle}
\end{figure}

Note that any integral maximum flow $f$ in $G$ corresponds to a feasible schedule $F$ for job $j$ with completion time at most $C_j$ and vice versa. 
In particular, for any pair $i,o \in [m]$, the maximum flow sends $d_{io}^j$ units of flow from $source_{io}$ to $sink_{io}$ through $d_{io}^j$ edge disjoint paths each of the form $(source_{io} \rightarrow i_t \rightarrow o_t \rightarrow sink_{io})$. The vertex capacities ensure that for any $t \in [C_j] \setminus [r_j]$, the edges of $G_t$ carrying flow form a matching. Thus the set of all triples $(i,o,t)$ such that the edge $(i_t, o_t)$ carries unit flow forms a feasible configuration (schedule) for $j$; job's release time is respected, as only times in $[C_j] \setminus [r_j]$ can be used by $j$.   In addition, the cost of the flow $f$ is exactly equal to $\sum_{(i,o,t) \in F} (\beta_{i,t} + \gamma_{o,t}$). Thus, we can find the feasible schedule $F^* \in \mathcal{F}(j)$ with completion time at most $C_j$ that minimizes $\sum_{(i,o,t) \in F^*} (\beta_{i,t} + \gamma_{o,t}$) by solving the minimum cost maximum flow problem on $G$. 

\section{Rounding}

In this section, we show how to round an optimal feasible solution to $\lpp$ to obtain a 2-approximate integral schedule. Towards this end, we consider some intermediate schedules that are \emph{continuous}. We say that a schedule is continuous if it schedules an \emph{integral} matching at each \emph{instantaneous} time. Thus, if the matching at time $t$ includes edge $(i,o)$ for job $j$, it means that a $dt$ amount of packet of job $j$ is sent from input port $i$ to output port $o$ during an infinitesimally small interval $[t, t + dt)$; or equivalently, the same amount of demand is served for job $j$'s edge $(i,o)$. In a continuous schedule, job $j$ completes at the earliest time when all its demands are served. 

Our rounding scheme consists of two main components.  In Section~\ref{sec:rounding1}, we first find a deadline $C^*_j$ for each job $j$ that admits a feasible integral schedule that is 2-approximate against the optimal $\lpp$ cost. We guarantee the feasibility by constructing a continuous schedule that completes each job $j$ by time $C^*_j$ such that $\sum_j w_j C^*_j$ is at most twice the optimal $\lpp$ cost. In Section~\ref{sec:rounding2}, we find an integral schedule meeting the deadlines, which will be our final schedule. 

\subsection{Reduction to Deadline Constrained Scheduling}
	\label{sec:rounding1}
	
We first give a high-level overview of the first step of the rounding procedure that we will present in this section. The procedure considers the following four schedules in this order. 

\begin{enumerate}
	\item {\bf $\{y_j^F\}$: An optimal solution to $\lpp$.} We view this as a sequence of fractional matchings over \emph{integer} time slots. The cost is measured as the solution's LP objective, which is denoted as $\cost(\lpp)$.
	\item {\bf $\sigma$: A continuous schedule constructed from $\{y_j^F\}$.} The cost is measured as the schedule's total weighted \emph{fractional completion time}: job $j$ is alive at time $t$ by at most $(1 - v)$ fraction and incurs a cost of at most $(1 - v)$ at the time if $j$'s \emph{each} demand has been served by at least $v$ fraction by the time.  The total cost over all jobs is denoted as $\cost(\sigma)$ and we will show $\cost(\sigma) = \cost(\lpp) - (1/2) \sum_j w_j$.
	\item {\bf $\sigma^\lambda$: A continuous schedule obtained by `stretching' $\sigma$.} We stretch the schedule $\sigma$ horizontally by a factor of $1 / \lambda$. $\lambda \in (0,1)$ is to be chosen randomly but the random choice can be derandomized. The cost is measured \emph{integrally}, meaning that $j$ completes when \emph{all} its demands are served. We will show 
	$\mathbb{E}[\cost(\sigma^\lambda)] \leq 2\ \cost(\sigma) = 2\ \cost(\lpp) - \sum_j w_j$.
	\item {\bf $\{C^*_j\}$: A continuous schedule with integer deadlines.} This schedule is essentially the same as $\sigma^\lambda$. In $\sigma^\lambda$, jobs may have fractional completion times. We round them up to their nearest integers, respectively. Then, the cost of any schedule that completes each job $j$ before its predefined completion time (deadline) is at most $\mathbb{E}[\sum_j w_j C^*_j] \leq   \mathbb{E}[\cost(\sigma^\lambda)] + \sum_j w_j  	\leq 2\ \cost(\lpp)$. 
\end{enumerate}

The second step of our rounding process will find an actual schedule that complete each job $j$ by $C^*_j$, which will be discussed in Section~\ref{sec:rounding2} in detail. 

In the rest of this section, we show how we construct each of the above schedules and upper bound its cost. 

\paragraph{The First Schedule $(\{y_j^F\})$:} Let $\{y_j^F\}$ be an optimal  solution to $\lpp$.
We use these $y_j^F$ values to construct a sequence of fractional matchings over integer time slots. 
Let $\displaystyle x_{io}^{jt} = \sum_{F \in \mathcal{F}(j) : (i,o,t) \in F} y_j^F$ denote the fraction of data of job $j$ that is sent from port $i \in [m]$ to port $o \in [m]$ in time slot $t \in [T] \setminus [r_j]$. Since we started from a feasible LP solution, the fractional coflow schedule defined by the $x_{io}^{jt}$ variables schedules $d_{io}^j$ amount of data for any pair $(i,o)$ of ports.
\begin{align*}
\sum_{t \in [T]} x_{io}^{jt} &= \sum_{t \in [T]} \sum_{\substack{F \in \mathcal{F}(j)\\(i,o,t) \in F}} y_j^F
= \sum_{F \in \mathcal{F}(j)} \sum_{\substack{t \in [T]}} y_j^F \mathbbm{1}_{(i,o,t) \in F}
= \sum_{F \in \mathcal{F}(j)} y_j^F d_{io}^j = d_{io}^j
\end{align*}
Also for any fixed time slot $t$, the variables $x_{io}^{jt}$ describe a fractional matching, i.e.
\begin{align*}
\sum_{o \in [m]} \sum_{j \in J} x_{io}^{jt} &= \sum_{j \in J} \sum_{o \in [m]} \sum_{\substack{F \in \mathcal{F}(j)\\(i,o,t) \in F}} y_j^F \overset{\eqref{eqn:LPinputconflict}}{\leq} 1 \quad,\forall i \in [m]\\
\textnormal{and } \sum_{i \in [m]} \sum_{j \in J} x_{io}^{jt} &= \sum_{j \in J} \sum_{i \in [m]} \sum_{\substack{F \in \mathcal{F}(j)\\(i,o,t) \in F}} y_j^F \overset{\eqref{eqn:LPoutputconflict}}{\leq} 1 \quad,\forall o \in [m]
\end{align*}

Note that the optimal cost of $\lpp$ can be written as 
\begin{equation*}
	\cost(\lpp) :=  \sum_j w_j \sum_{F \in \cF(j)} C_j^F y_j^F  =\sum_j w_j \sum_{C \geq 1} C \cdot \sum_{F \in \cF(j): C_j^F = C} y_j^F
\end{equation*}

\paragraph{The Second Schedule ($\sigma$):} We construct a continuous schedule $\sigma$ from $\{y_j^F\}$. Recall that in a continuous schedule, an integral matching is scheduled at each instantaneous time. Consider any fixed time slot $t \in [T]$. Let $X^t$ denote the fractional matching given by $\{y_j^F\}$. By the Birkhoff-von Neumann theorem, $X^t$ can be decomposed into a polynomial number of integral matchings $M^t_1, M^t_2, \ldots, M^t_\ell$ such that
\[X^t = \sum_{i=1}^\ell \alpha^t_i M^t_i \textnormal{\quad and \quad} \sum_{i=1}^\ell \alpha^t_i = 1\]
Thus, we can convert the fractional schedule for time slot $t$ into a continuous time schedule by 
scheduling each integral matching $M^t_i$ for $\alpha_i (\tau_2 - \tau_1)$ time during each infinitesimal time interval $[\tau_1, \tau_2) \in [t-1, t)$. In other words, the linear combination of integral matchings is `smeared' across all instantaneous time during $[t-1, t)$.
Let $\sigma$ denote such a continuous schedule obtained by decomposing the fractional matching in each time slot $t \in [T]$. For any time interval $[\tau_1, \tau_2)$, let $\int_{\tau = \tau_1}^{\tau_2} \sigma_{io,j}(\tau) d\tau$ denote the amount of data of co-flow $j$ that is sent from port $i$ to port $o$ in the schedule $\sigma$. By construction, we have that
\begin{align*}
\int_{\tau=t-1}^t \sigma_{io,j}(\tau) d\tau = x_{io}^{jt}. 
\end{align*}

To compute $\cost(\sigma)$, we need to measure how much fraction of each job $j$ is completed by time $\tau$. Recall that in this measure, job $j$ is said to be completed by $v$ fraction at time $\tau$, if for every pair $i,o$ of ports, at least $v d_{io}^j$ amount of data is transferred from $i$ to $o$ by time $\tau$ in schedule $\sigma$, i.e., $\int_{\tau' = 0}^\tau \sigma_{io,j}(\tau') d \tau' \geq v d^j_{io}$. Let $\tilde C_j(v)$ denote the first time when $j$ is completed by $v$ fraction in $\sigma$. The total fractional completion time of job $j$ in schedule $\sigma$ is thus defined as $\int_{v =0}^1 \tilde C_j(v) d v$.
Formally, we now want to upper bound $\cost(\sigma) := \sum_j w_j \int_{v =0}^1 \tilde C_j(v) d v$.

\begin{lemma}
\label{lem:sigmacost}
$\cost(\sigma) = \cost(\lpp) - \frac{1}{2}(\sum_{j \in J} w_j)$.
\end{lemma}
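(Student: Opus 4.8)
The plan is to compute $\cost(\sigma) = \sum_j w_j \int_0^1 \tilde C_j(v)\,dv$ directly by relating the fractional-completion profile of each job $j$ in $\sigma$ to the configuration-completion profile encoded by $\{y_j^F\}$. The key structural fact I want to establish is: \emph{for every job $j$ and every $C \ge 1$, the amount of $j$'s ``configuration mass'' completed by the end of time slot $C$ equals the fraction of $j$ completed by time $C$ in $\sigma$.} Concretely, let $p_j(C) := \sum_{F \in \cF(j) : C_j^F \le C} y_j^F$ be the total $y$-weight of configurations of $j$ finishing by time $C$. I claim that in $\sigma$, job $j$ is completed by exactly $p_j(C)$ fraction at integer time $C$ — equivalently $\tilde C_j^{-1}(C) = p_j(C)$ in the appropriate sense. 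The ``$\ge$'' direction: any configuration $F$ with $C_j^F \le C$ contributes $y_j^F \cdot d_{io}^j$ to $\sum_{t \le C} x_{io}^{jt}$ for every pair $(i,o)$, so $\sum_{t\le C} x_{io}^{jt} \ge p_j(C)\, d_{io}^j$, i.e.\ every edge is served to fraction $\ge p_j(C)$ by time $C$ in $\sigma$, hence $j$ is $p_j(C)$-completed. The ``$\le$'' direction needs that the bound is tight on at least one edge — this follows because summing $\sum_{t\le C} x_{io}^{jt}$ and comparing with $\sum_t x_{io}^{jt} = d_{io}^j$, together with the fact that any $F$ with $C_j^F > C$ serves \emph{some} edge at a time $> C$, so for the ``bottleneck'' edge realizing the max degree at the last slot, the served fraction by time $C$ is exactly $p_j(C)$. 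I should double-check this direction carefully; it is the crux.

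Granting the structural fact, the computation is a standard ``area under the curve'' exchange of the order of integration. Since $v \mapsto \tilde C_j(v)$ and $C \mapsto p_j(C)$ are inverse monotone step-like functions, and $\tilde C_j$ only takes values that are at most integers (because $\sigma$ within slot $t$ serves the same total per-edge amount by the end of slot $t$ regardless of the Birkhoff decomposition, so completion ``happens'' at integer boundaries when measured by $\tilde C_j(v)$ — actually $\tilde C_j(v)$ can be non-integer inside a slot, so instead I will bound $\tilde C_j(v) \le \lceil \cdot \rceil$), I will write
\begin{align*}
\int_0^1 \tilde C_j(v)\, dv \;\le\; \sum_{C \ge 1} C \cdot \big(p_j(C) - p_j(C-1)\big) \;-\; \tfrac12,
\end{align*}
wait — more precisely: $\int_0^1 \tilde C_j(v)\,dv = \sum_{C\ge 1}\int_{p_j(C-1)}^{p_j(C)} \tilde C_j(v)\,dv$, and on that subinterval $\tilde C_j(v) \in [C-1, C]$ with the data being served \emph{uniformly in $v$ across the slot} by the smearing construction, so $\int_{p_j(C-1)}^{p_j(C)}\tilde C_j(v)\,dv = (C - \tfrac12)\big(p_j(C)-p_j(C-1)\big)$. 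Summing over $C$ and using $\sum_C (p_j(C)-p_j(C-1)) = 1$ and $\sum_C C\,(p_j(C)-p_j(C-1)) = \sum_{F} C_j^F y_j^F$, I get $\int_0^1 \tilde C_j(v)\,dv = \sum_F C_j^F y_j^F - \tfrac12$. Multiplying by $w_j$ and summing over $j \in J$ yields $\cost(\sigma) = \cost(\lpp) - \tfrac12 \sum_{j\in J} w_j$.

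The main obstacle I anticipate is justifying the ``$v$-uniform'' / linearity claim inside a single time slot, i.e.\ that $\tilde C_j(v) = (C-1) + (v - p_j(C-1))/(p_j(C)-p_j(C-1)) \cdot 1$ on the relevant subinterval, which is what produces the clean $C - \tfrac12$ term and hence the exact $-\tfrac12\sum_j w_j$ rather than an inequality. This relies on the smearing step: within $[t-1,t)$ the fractional matching $X^t$ is spread so that by time $t-1+s$ exactly an $s$-fraction of $x_{io}^{jt}$ has been served on every edge $(i,o)$, hence the per-edge served amount by time $t-1+s$ is $\sum_{t'\le t-1} x_{io}^{jt'} + s\,x_{io}^{jt}$, which is an affine function of $s$; combined with the structural fact that the binding edge is the same throughout the slot, $\tilde C_j$ restricted to this range is affine in $v$ with the stated slope. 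I will need to argue the binding edge does not change within a slot, or otherwise handle the max over edges — a piecewise-affine max is still fine because the endpoints and the total integral are what matter, and convexity of the max pushes in the harmless direction; but getting the equality (not just inequality) may require noting that the statement is claimed only for $\cost(\sigma)$ as \emph{defined} via $\tilde C_j(v)$ being the first time $j$ is $v$-completed, and then the identity is exact. I will present it as an equality, flagging the affine-in-slot computation as the one routine-but-load-bearing step.
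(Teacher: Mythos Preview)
Your approach is the paper's: both attribute to each configuration $F$ the contribution $w_j\,y_j^F(C_j^F-\tfrac12)$, coming from the fact that the last slot of $F$ is smeared uniformly over $[C_j^F-1,C_j^F)$. The paper's short proof is just a compressed version of your computation.

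That said, the step you correctly flag as ``the crux'' is a genuine gap, and it cannot be closed as an \emph{equality}: the ``$\le$'' direction of your structural fact fails. Take a job with two disjoint unit-demand edges $e_1,e_2$ and configurations $F_1=\{(e_1,1),(e_2,2)\}$, $F_2=\{(e_2,1),(e_1,2)\}$ with $y_j^{F_1}=y_j^{F_2}=\tfrac12$. Then $p_j(1)=0$, yet in $\sigma$ both edges are half-served at time $1$; in fact $\tilde C_j(v)=2v$, so $\int_0^1\tilde C_j(v)\,dv=1<\tfrac32=\sum_F y_j^F(C_j^F-\tfrac12)$. Your bottleneck-edge argument does not rescue this: different configurations can have different bottleneck edges, and their contributions on any fixed edge can overlap. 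The paper's proof glosses over exactly the same point.

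What \emph{does} go through is the inequality $\cost(\sigma)\le\cost(\lpp)-\tfrac12\sum_j w_j$, and that is all that is used downstream (the stretching lemma and the round-up step only chain upper bounds). It follows directly from your ``$\ge$'' direction: since every slot $(i,o,t')\in F$ has $t'\le C_j^F$, the smearing guarantees that on every edge the served fraction at time $\tau$ is at least $q_j(\tau):=\sum_F y_j^F\min\{1,\max\{0,\tau-C_j^F+1\}\}$; hence $\tilde C_j(v)\le q_j^{-1}(v)$ and $\int_0^1\tilde C_j(v)\,dv\le\int_0^1 q_j^{-1}(v)\,dv=\sum_F y_j^F(C_j^F-\tfrac12)$. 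The paper itself quietly downgrades to ``$\le$'' when revisiting this step in Section~\ref{sec:remove}.
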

\begin{proof}
Consider a fixed job $j$ and a configuration $F \in \cF(j)$  with $y_j^F > 0$. Let $t = C_j^F$ denote the last time slot during which $F$ schedules a matching. Let us compute the contribution of job $j$ to $\cost(\sigma)$ due to configuration $F$.

Let $M$ denote the matching scheduled by $F$ in this last time slot. The LP solution schedules matching $M$ by a $y_j^F$ fraction in time slot $t$. Since we `smeared' the fractional matching in time slot $t$ over continuous times in the interval $[t-1, t)$, our continuous schedule $\sigma$ schedules matching $M$ by a $y_j^F d\tau$ fraction during any time interval $[\tau, \tau + d \tau) \in [t-1, t)$. Thus $j$'s contribution to $\cost(\sigma)$ due to configuration $F \in \cF(j)$ is $w_j\int_{\tau = t-1}^t y_j^F \tau d\tau = w_jy_j^F(C_j^F - 1/2)$.

Summing over all jobs $j$ and all configurations, we have
\begin{align*}
\cost(\sigma) &= \sum_j w_j \int_{v =  0}^{1} \tilde C_j(v) d v = \sum_j w_j \sum_{F \in \cF(j)} y_j^F(C_j^F- 1/2) = \cost(\lpp) - \sum_j w_j /2 \label{sched2}
\end{align*}
as desired.
\end{proof}

Finally, we note that we don't have to construct $\sigma$ explicitly, as we just need to find $\tilde C_j(v)$ in polynomial time for given $j$ and $v \in [0, 1]$.

\paragraph{The Third Schedule ($\sigma^\lambda$):} We construct a new continuous schedule $\sigma^\lambda$ from $\sigma$ as follows. First, we choose a $\lambda \in [0,1]$ randomly drawn according to the probability density function $f(v) = 2v$. Next, we ``stretch" the schedule $\sigma$ by a factor of $1/\lambda$. More precisely, if matching $M$ is scheduled in $\sigma$ during an infinitesimal interval $[\tau_1, \tau_2)$, the same matching is scheduled in $\sigma^\lambda$ during $[\tau_1 / \lambda, \tau_2 / \lambda)$. 
This `stretching' (also called slow-motion) idea  was used in other scheduling contexts  \cite{schulz1997random,queyranne20022+,im2014preemptive}.
Note that jobs' release times remain respected after the stretching.

Let $C_{\sigma^\lambda}(j)$ denote the completion time of job $j$ in the stretched schedule, i.e., $C_{\sigma^\lambda}(j)$ is the earliest time such that $d_{io}^j$ amount of data has been transferred from port $i$ to port $o$ for all pairs $i,o \in [m]$.
We now measure the cost of this new schedule, $\cost(\sigma^\lambda) := \sum_j w_j C_{\sigma^\lambda}(j)$, and upper bound it.

\begin{lemma}
\label{lem:stretching}
$\mathbb{E}[\cost(\sigma^\lambda)] = 2\ \cost(\sigma).$
\end{lemma}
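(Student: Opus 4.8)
The plan is to pin down, for each fixed $\lambda\in(0,1]$, the exact completion time of every job in $\sigma^\lambda$ as a function of the fractional completion profile of $\sigma$, and then integrate against the density $f(v)=2v$; the weighting by $w_j$ and the summation over jobs are immediate afterwards since every quantity in sight is linear in them.

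First I would fix notation: for a job $j$ and ports $i,o$, let $s^j_{io}(\tau):=\int_{0}^{\tau}\sigma_{io,j}(\tau')\,d\tau'$ be the amount of $j$'s demand on edge $(i,o)$ delivered by time $\tau$ in $\sigma$. This is non-decreasing, tends to $d^j_{io}$, and by the definition of the fractional completion time one has $\tilde C_j(v)=\max_{i,o}\,\inf\{\tau:\,s^j_{io}(\tau)\ge v\,d^j_{io}\}$. The key observation is how stretching affects delivered volume: during $[\tau_1/\lambda,\tau_2/\lambda)$ the schedule $\sigma^\lambda$ runs exactly the matchings $\sigma$ runs during $[\tau_1,\tau_2)$, and a matching always transmits at unit rate, so each infinitesimal slice of $\sigma^\lambda$ delivers $1/\lambda$ times as much data as the corresponding slice of $\sigma$. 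A change of variables then gives that the amount of $j$'s data on edge $(i,o)$ delivered by time $\tau$ in $\sigma^\lambda$ equals $\tfrac1\lambda\,s^j_{io}(\lambda\tau)$.

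From this I would read off that edge $(i,o)$ of job $j$ becomes fully served in $\sigma^\lambda$ exactly when $s^j_{io}(\lambda\tau)=\lambda\,d^j_{io}$, i.e. at time $\tfrac1\lambda\inf\{\tau:\,s^j_{io}(\tau)\ge \lambda\,d^j_{io}\}$; taking the maximum over all pairs $(i,o)$ and invoking the identity for $\tilde C_j$ above yields the clean formula $C_{\sigma^\lambda}(j)=\tilde C_j(\lambda)/\lambda$. (I would also note in passing that since $\lambda\le 1$ stretching only postpones events, so $\sigma^\lambda$ respects release times and is a legitimate continuous schedule.) Then I would take the expectation over $\lambda$ drawn with density $f(v)=2v$ on $(0,1]$:
\[
  \mathbb{E}\big[C_{\sigma^\lambda}(j)\big]
  =\int_{0}^{1}\frac{\tilde C_j(v)}{v}\cdot 2v\,dv
  =2\int_{0}^{1}\tilde C_j(v)\,dv ,
\]
so that $\mathbb{E}[\cost(\sigma^\lambda)]=\sum_j w_j\,\mathbb{E}[C_{\sigma^\lambda}(j)]=2\sum_j w_j\int_0^1\tilde C_j(v)\,dv=2\,\cost(\sigma)$ by the definition of $\cost(\sigma)$. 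The point of the density $f(v)=2v$ is precisely that its factor $v$ cancels the $1/v$ coming from the stretch.

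The only genuine obstacle is the volume observation in the middle step: one must be careful that stretching a unit-capacity matching schedule by $1/\lambda$ does not merely dilate time but actually multiplies the data delivered to each edge by $1/\lambda$, so that a job finishes as soon as it had been $\lambda$-fraction completed in $\sigma$ (rescaled in time), rather than finishing at $\tilde C_j(1)/\lambda$ — the latter interpretation would make the lemma false. Once that is set up correctly, the remainder is a one-line change of variables and a one-line integral, and since we only need the statement in expectation there is no concentration issue here.
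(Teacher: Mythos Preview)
Your proposal is correct and follows essentially the same route as the paper: both pivot on the identity $C_{\sigma^\lambda}(j)=\tilde C_j(\lambda)/\lambda$ and then integrate against the density $f(v)=2v$ so that the factor $1/v$ cancels. The only noteworthy difference is that you actually justify the \emph{equality} $C_{\sigma^\lambda}(j)=\tilde C_j(\lambda)/\lambda$ (via the volume computation that delivered data in $\sigma^\lambda$ by time $\tau$ equals $\tfrac1\lambda\,s^j_{io}(\lambda\tau)$), whereas the paper is content with the inequality $C_{\sigma^\lambda}(j)\le \tilde C_j(\lambda)/\lambda$, arguing only that since $\sigma$ has served a $\lambda$-fraction of every edge by time $\tilde C_j(\lambda)$, the stretched schedule has served everything by time $\tilde C_j(\lambda)/\lambda$. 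Your version matches the lemma's stated equality; the paper's version suffices for the downstream $2$-approximation bound but, strictly speaking, proves only $\mathbb{E}[\cost(\sigma^\lambda)]\le 2\,\cost(\sigma)$.
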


\begin{proof}
Recall that we define $\tilde C_j(v)$ to be the earliest time when job $j$ is completed by $v$ fraction in $\sigma$. It is easy to observe that for every job $j$, $C_{\sigma^\lambda}(j) \leq \tilde C_j(\lambda) / \lambda$. This is because, $\sigma$ schedules at least a $\lambda$ fraction  of the demand for each pair $i,o$ by time $\tilde C_j(\lambda)$ and thus $\sigma^\lambda$ schedules all the demands fully by time $\tilde C_j(\lambda) / \lambda$.
\begin{align*}
\mathbb{E}[\cost(\sigma^\lambda)] &= \sum_j w_j \mathbb{E}[C_{\sigma^\lambda}(j)] \leq \sum_j w_j \mathbb{E}[ \tilde C_j(\lambda) / \lambda] = \sum_j w_j \int_{v=0}^1 \tilde C_j(v) / v \cdot (2v) dv\\
&\leq 2 \sum_j w_j \int_{v=0}^1 \tilde C_j(v) dv = 2\  \cost(\sigma)
\end{align*}
\end{proof}

\paragraph{The Last Schedule ($\{C^*_j\}$) :} In the previous continuous schedule, $\sigma^\lambda$, every job $j$ is completed by time $C_{\sigma^\lambda}(j)$, which is not necessarily an integer. Since we will use this completion time as $j$'s deadline in the second rounding procedure to find an actual schedule, we need to make sure the completion times / deadlines are integers. Hence, we set $C_j := \lceil C_{\sigma^\lambda}(j) \rceil$. 
Then, we immediately have,
\begin{equation}
	\label{sched4}
\cost(\{C^*_j\}) = \sum_j w_j C^*_j \leq \cost(\sigma^\lambda) + \sum_j w_j.
\end{equation}

From Lemma \ref{lem:sigmacost}, Lemma \ref{lem:stretching} and Eq.\ \eqref{sched4}, we have $\mathbb{E}[\cost(\{C^*_j\})] \leq 2\ \cost(\lpp)$. Thus, we have found integer deadlines, $\{C_j^*\}$, such that there exists a continuous schedule that completes each job $j$ by time $C_j^*$ and $\sum_j w_j C_j^*$ is at most twice the optimal $\lpp$ cost in expectation.

\medskip
We note that the random choice of $\lambda$ can be derandomized. To keep the flow of our presentation, we defer the derandomization to Section~\ref{sec:derandomization}.

\subsection{Finding a Feasible Integral Schedule Meeting Deadlines}
	\label{sec:rounding2}

This section is devoted to finding an integral schedule that complete all jobs before their respective deadlines, $\{C_j^*\}$. Recall that $\sigma^\lambda$ is such a schedule, except that it is continuous, not integral. We first convert the continuous time schedule $\sigma^\lambda$ back into a time-slotted fractional schedule such that each co-flow $j$ is completed by time $C_j^*$. Define $z_{io}^{jt} = \int_{\tau = t-1}^t \sigma_{io,j}^\lambda(\tau) d\tau$ for all $t \leq \lceil\tau_{io,j}^\lambda\rceil$. It is easy to verify that the set $\{z_{io}^{jt}\}$ of values define a feasible fractional coflow schedule such that each coflow finishes by time $C^*_j$. In particular, we have the following. 
\begin{align}
\sum_{t \leq C_j^*} z_{io}^{jt} = \sum_{t \leq C_j^*} \int_{\tau = t-1}^t \sigma_{io,j}^\lambda(\tau) d\tau = \int_{\tau = 0}^{\tau_{io,j}^\lambda} \sigma_{io,j}^\lambda(\tau) d\tau = d_{io}^j,\quad\quad\mbox{$\forall j \in J$ and $\forall i,o \in [m]$} \label{eq:zdemand}\\
\sum_{j \in J} \sum_{o \in [m]} z_{io}^{jt} = \sum_{j \in J} \sum_{o \in [m]} \int_{\tau = t-1}^t \sigma_{io,j}^\lambda(\tau) d\tau \leq 1,\quad\quad\mbox{$\forall t \in [T]$ and $\forall i \in [m]$}
\label{eq:zinputcap}\\
\sum_{j \in J} \sum_{i \in [m]} z_{io}^{jt} = \sum_{j \in J} \sum_{i \in [m]} \int_{\tau = t-1}^t \sigma_{io,j}^\lambda(\tau) d\tau \leq 1,\quad\quad\mbox{$\forall t \in [T]$ and $\forall o \in [m]$}
\label{eq:zoutputcap}
\end{align}
The last inequality in the above two statements follows from construction as $\sigma^\lambda$ schedules an integral matching at any instantaneous time $\tau$.

As a final step, we construct an instance of the network flow problem with integer capacities such that integral, maximum flows in this network are in one-to-one correspondence with integral feasible coflow schedules. Our final integral coflow schedule is obtained by solving a maximum flow problem on this network. This reduction to network flow is similar to the one used for the separation oracle in Section \ref{sec:oracle} and we highlight the differences below.

Construct a directed graph $G$ as follows. For each $t \in [T]$, let $G_t = (I_t, O_t, A_t)$ be a complete, bipartite, directed subgraph with unit capacity edges and vertices.
For each job $j$ and pair $i,o \in [m]$, add vertex $source_{io,j}$ and connect it to all vertices $i_t$ for all $t \in [C^*_j] \setminus [r_j]$. Similarly, add vertex $sink_{io,j}$ and connect it to all vertices $o_t$ in the same range.
Finally, add two new vertices $source$ and $sink$. For every coflow $j$ and pair $i,o \in [m]$, add an arc of capacity $d_{io}^j$ from $source$ to $source_{io,j}$ and also from $sink_{io,j}$ to $sink$. 
In contrast with the construction for the separation oracle that only considered a fixed co-flow $j$, here we add source and sink vertices for all co-flows.

Note that any integral maximum flow $f$ in $G$ of value $\sum_{j \in J} \sum_{i,o \in [m]} d_{io}^j$ corresponds to a feasible schedule $F$ such that any coflow $j$ has completion time at most $C^*_j$ and vice versa. In particular, for any coflow $j$ and pair $i,o \in [m]$, the maximum flow sends $d_{io}^j$ units of flow from $source_{io,j}$ to $sink_{io,j}$ through $d_{io}^j$ edge disjoint paths. The vertex capacities ensure that for any $t \in [T]$, the edges of $G_t$ carrying flow form a matching. 

Finally we demonstrate that the $z_{io}^{jt}$ values defined above yield a fractional flow $f$ of value $\sum_{j \in J} \sum_{i,o \in [m]} d_{io}^j$. Since all capacities in this network are integers, this guarantees the existence of an integral flow of the same value that can be found using an off-the-shelf network flow algorithm. Let $f(e)$ denote the flow through arc $e$. We define $f$ as follows.
\begin{align}
f(e) = 
\begin{cases}
\sum_j z_{io}^{jt}, &\quad\quad\mbox{for $e = (i_t,o_t) \in A_t$} \\
z_{io}^{jt}, &\quad\quad\mbox{for $e = (source_{io,j}, i_t)$ and $e = (o_t, sink_{io,j})$}\\
d_{io}^j, &\quad\quad\mbox{for $e = (source, source_{io,j})$ and $e = (sink_{io,j}, sink)$}
\end{cases}
\end{align}
Equations \eqref{eq:zinputcap} and \eqref{eq:zoutputcap} guarantee that the flow $f$ satisfies all edge and vertex capacity constraints. Equation \eqref{eq:zdemand} ensures that the value of the flow is indeed $\sum_{j \in J} \sum_{i,o \in [m]} d_{io}^j$ as desired.

\section{Removing the Simplifying Assumption: Handling Large Job Demands and Release Times}
	\label{sec:remove}

In this section, we remove the simplifying assumption stated in Section~\ref{sec:simplifying} at an extra $(1+\eps)$ factor loss in the approximation ratio. As the assumption was used in several places, we discuss how to remove the assumption from each place. The key ideas remain the same; thus, we only discuss the major differences in detail.

\subsection{LP Formulation}

We only need to consider times no greater than $T := \max_{j \in J} r_j + \sum_{j \in J, i \in [m], o \in [m]} d_{io}^j$. This is because we can finish all jobs by time $T$ even if we deliver only one packet at a time after the last job's arrival and there is no reason to idle all ports at any time as long as there is a packet that is ready for schedule. As $T$ could be exponentially large, we use a standard trick that partitions the whole set of integer times, $\{1, 2, 3, \cdots, T\}$ into disjoint integer time intervals of exponentially increasing lengths, $I_1$, $I_2$, ..., $I_\kappa$ such that for any two integers, $t_1 < t_2$ in the same interval, we have $t_2 / t_1 \leq 1 + \eps$. This can be done, for example, by setting $I_1 = \{1 \}$, $I_2 = \{2\}$, $\cdots$, $I_{10 /\eps} = \{\frac{10}{\eps}\}$, 
and $I_{10/\eps+1} = \{\lceil \frac{10}{\eps} (1+\eps)^0 \rceil + 1, \ldots, \lceil \frac{10}{\eps} (1+\eps)^1 \rceil\}$, $I_{10 / \eps+2} = \{\lceil \frac{10}{\eps} (1+\eps)^1 \rceil + 1, \ldots, \lceil \frac{10}{\eps} (1+\eps)^2 \rceil\}$, and so on, where $\kappa = O((1 / \eps) \log T)$.
Further, we recursively split an interval if the interval contains the release time of some job $j$ until every interval has at most one job's release time, and if so, it is the smallest time in the interval. Let $\cI$ denote the collection of the created intervals. For notational convenience, we don't change the notation, so that $I_1, I_2, \cdots, I_\kappa \in \cI$ are ordered in increasing order of the times in them. 

To compactly describe configurations, we do not distinguish times in same interval $I_k$. Intuitively, as any two times in the same interval differ by a factor of at most $1+\eps$, this compact description will increase the optimal LP objective by a factor of at most $1+\eps$. For a job $j$, a configuration is a complete,  feasible, integral sketch for $j$. Formally, a configuration $F$ for job $j$ is a ternary relation of tuples of the form $(i,o,I)$, which is associated with a weight $f(i,o,I)$, that indicates $f(i,o,I)$ units of data are transferred from input port $i$ to output port $o$ during interval $I \in \cI$. For a feasible configuration $F$ for job $j$, the relation must satisfy:

\begin{enumerate}
	\item  $\sum_{(i,o,I) \in F} f(i,o,I)  = d^j_{io}$ for all $i, o \in [m]$.
	\item  $\sum_{(i,o,I) \in F} f(i,o,I) \leq |I|$ for every pair of $i \in [m]$ and $I \in \cI$. 
	\item  $\sum_{(i,o,I) \in F} f(i,o,I) \leq |I|$ for every pair of $o \in [m]$ and $I \in \cI$. 
\end{enumerate}

Let $\mathcal{F}(j)$ denote the set of all possible feasible configurations of $j$. Let $C_j^F$ denote the completion time of job $j$ under configuration $F \in \mathcal{F}(j)$. More precisely, $C_j^F := \max \{t \ | \ \exists i, o \in [m], I \in \cI \mbox{ s.t. } f(i,o,I) > 0, t \in I\}$. In a similar spirit, define $C(I) := \max \{ t \in I\}$. We only need to change two constraints (\ref{eqn:LPinputconflict}) and (\ref{eqn:LPoutputconflict}) of $\lpp$ as follows.

\begin{enumerate}
  \item $\forall i \in [m] \textnormal{ and } \forall I \in \cI$, $\displaystyle \sum_{j \in J, o \in [m]} \sum_{\substack{F \in \mathcal{F}(j)   s.t. (i,o,I) \in F}} y_j^F \cdot f(i,o,I) \leq |I|$.
  \item $\forall o \in [m] \textnormal{ and } \forall I \in \cI$, $\displaystyle \sum_{j \in J, i \in [m]} \sum_{\substack{F \in \mathcal{F}(j) s.t. (i,o,I) \in F}} y_j^F \cdot f(i,o,I) \leq |I|$.
 \end{enumerate}

This new $\lpp$'s optimal objective is at most $(1+\eps)$ off the optimum, since times in the same interval differ by a factor of at most $(1+\eps)$ and if a job completes during an interval, we pretend that it does at the last time in the interval. Note that since $\cI$ has a polynomial number of intervals, the new $\lpp$ only has a polynomial number of constraints. 

\subsection{Solving the LP} 

The dual constraint (\ref{eqn:LPdual})  of $\lpd$ becomes:
\begin{align*}
  & \forall j \in J \textnormal{ and } \forall F \in \mathcal{F}(j), && \displaystyle \alpha_j - \sum_{(i,o,I) \in F}  (\beta_{i,I} + \gamma_{o,I}) \cdot f(i, o, I) \leq w_j C_j^F  
\end{align*}

Fix a job $j$ and $I_k \in \cI$. The key problem for the separation oracle becomes the following. During interval $I$, input port $i$ and output port $o$ are priced at $\beta_{i,I}$ and $\gamma_{o,I}$, respectively, per unit traffic. Our task is to find a minimum-cost configuration for $J$ that only uses $I_1, I_2, ..., I_k \in \cI$. As before, we reduce the problem to an instance of the minimum cost flow problem. For each interval $I \in \{I_1, I_2, ..., I_k\}$, instead of each $t$, we create a complete, bipartite, directed subgraph $G_I = (I_I, O_I, A_I)$. The cost of edge $(i_I, o_I)$ is set to $\beta_{i,I} + \gamma_{o,I}$.
All unit capacities in $G_t$ are increased to $|I|$. Similarly, $source_{io}$ is connected to vertices $i_I$ for every interval $I \in \{I_1, I_2, ..., I_k\}$ that does not include any time smaller than $r_j$, with capacity $|I|$. Likewise, the capacities of the edges connecting vertices $o_I$ to $sink_{io}$ are increased to $|I|$. Note that any integral maximum flow $f$ in $G$ corresponds to a feasible configuration $F$ for job $j$ with completion time at most $C(I_k)$ and vice versa. Further, the cost of the flow sending $f'(i,o,I)$ units of flow from $i$ to $o$ during interval $I$ is exactly $\sum_{i,o} (\beta_{i,I} + \gamma_{o,I}) \cdot f'(i,o,I)$. Thus, we can obtain a feasible configuration $F \in \mathcal{F}(j)$ with completion time at most $C(I_k)$ that minimizes $\sum_{(i,o,t) \in F} (\beta_{i,t} + \gamma_{o,t}) \cdot  f(i,o,I)$ from a minimum cost maximum flow in $G$.

\subsection{First Rounding}

In this section, we adapt the rounding in Section~\ref{sec:rounding1} to avoid using the simplifying assumption. Since the entire rounding is almost identical, we only describe the key differences. For each $I = \{t_1, t_2, ..., t_h\} \in \cI$, from an optimal solution to $\lpp$, we construct a continuous schedule that schedules the same integral matchings uniformly over the interval $[t_1 - 1, t_h)$. Then, we can still show $\cost(\sigma) \leq \cost(\lpp) - (1/2) \sum_j w_j$---we can get a tighter bound but this bound suffices for our goal. The remaining procedure is exactly the same, so we can show $\mathbb{E}[\sum_j w_j C^*_j] \leq 2\ \cost(\lpp)$.

\subsection{Second Rounding}

The second rounding also remains almost identical. There are two main differences that are worth special attention. First, we need to refine $\cI$ to include the discovered deadlines $C^*_j$. This is because we treat all times in the same interval in $\cI$ equally, and as a result, we can't strictly enforce jobs' deadlines which could potentially yield an integral schedule whose cost is more than twice the optimal LP cost. Therefore, we recursively refine intervals, so that each interval $I \in \cI$ contains at most one distinct deadline $C^*_j$, and if so, it is the largest time in the interval. As before, we do not distinguish times in the same interval though. Thus, we will
use $I$ in place of $t$ and increase the capacity of any edge involving interval $I$ from one to  $|I|$. 

The other key difference is this. Take a close look at the discovered integral maximum flow for a fixed interval $I \in \cI$. Observe that in the flow, each node supports at most $|I|$ units of flow. In other words, unit sized packets are delivered from input ports to output ports to (partially) serve some jobs and each port is used by at most $|I|$ packets. To obtain a feasible integral schedule, we convert this into $|I|$ integral matchings. As mentioned earlier, it was already observed in \cite{qiu2015minimizing} that such a conversion is possible: a bipartite graph can be decomposed into a set of integral matchings whose number is equal to the graph's maximum vertex degree. Thus, we can find in polynomial time an integral schedule that is 2-approximate against the optimal $\lpp$ cost. This implies the discovered schedule is $2(1+\eps)$-approximation, as the relaxation is at most $(1+\eps)$ factor off the optimum.

\section{Derandomization}
	\label{sec:derandomization}
In this section, we discuss how to derandomize the random choice of $\lambda \in (0,1]$, which was used in the first rounding to construct a stretched schedule $\sigma^\lambda$ from $\sigma$. Let us first define \emph{step} values. We say that $v \in (0, 1]$ is a step value if $\sum_{F \in \cF(j): C^F_j \leq C} y^F_j= v$ for some $j \in J$ and integer $C \geq 1$ --- in other words,  exactly $v$ fraction of some job $j$'s configurations are completed by some integer time in the $\lpp$ solution, $\{y^F_j\}$. Let $V$ denote the set of all step values; $1 \in V$ by definition. We can assume w.l.o.g. that $|V|$ is polynomially bounded, as we can find an optimal solution to $\lpp$ with a polynomial-size support.

We have shown in Section~\ref{sec:rounding1}
 that $\mathbb{E}[\cost(\sigma^\lambda)] \leq 2\cost(\lpp) - \sum_j w_j$ and $\cost(\{C_j^*\}) \leq \cost(\sigma^\lambda) + \sum_j w_j$. Therefore, if we find $\lambda$ such that $\cost(\sigma^\lambda) \leq 2\cost(\lpp) - \sum_j w_j$ (it is easy to see such a $\lambda$ exists from a simple averaging argument), then we can find $\{C^*_j\}$ such that $\cost(\{C_j^*\}) \leq 2 \cost(\lpp)$. 
 
 Our remaining goal is to find  $\lambda$ that minimizes $\cost(\sigma^\lambda)$. Towards this end, fix a job $j$ and take a close look at $j$'s contribution to $\cost(\sigma^\lambda)$ depending on the choice of $\lambda$. Fix any two adjacent step values $v_1 < v_2$ in $V$. Suppose $\lambda$ was set to a value $v \in (v_1, v_2]$.
Let $t$ be the earliest time slot such that $\sum_{F \in \cF(j): C^F_j \leq t} y^F_j \geq v_2$.  
Let $v_0 := \sum_{F \in \cF(j): C^F_j \leq t-1} y^F_j$; note that $v_0 \leq v_1$, as $v_1$ and $v_2$ are adjacent step values. Then, we have $C_{\sigma^\lambda}(j) =\frac{1}{v} (t - 1 + \frac{v - v_0}{v_2 - v_0})$.
This is because exactly $v_0$ fraction of $j$'s configurations complete by time $t-1$ and there is $(v_2 - v_0)$ fraction of $j$'s configurations completing at time $t$; thus, $\tilde C_j(v) = t - 1 + \frac{v - v_0}{v_2 - v_0}$. Hence, for any $v \in (v_1, v_2]$, $j$ contributes to $\cost(\sigma^\lambda)$ by $w_j \frac{1}{v} (t - 1 + \frac{v - v_0}{v_2 - v_0})$. This becomes a linear function in $z$ over $[1/v_2,1/v_1)$ if we set $z = 1/v$. Therefore, we get a piece-wise linear function $g(z)$ by summing over all jobs and considering all pairs of two adjacent step values in $V$. We set $\lambda$ to the the inverse of $z$'s value that achieves the global minimum, which can be found in polynomial time.
 
\section{Extensions To Non-Uniform Capacity Constraints}
\label{sec:extensions}

In this section, we consider the co-flow scheduling problem with non-uniform port capacities. The problem setting remains exactly the same as regular co-flow scheduling, with the exception that port $i$ has an integer capacity $c_i$ that denotes the number of packets that can be processed by port $i$ at any time $t$. We show that our algorithm can be easily extended to give the same approximation guarantee even for this generalized problem. Because the key ideas remain the same, we only highlight the primary differences in each step.

\subsection{LP Formulation and Separation Oracle}

Let $\cF(j)$ denote the set of all feasible configurations of job $j$. Note that a configuration $F \in \cF(j)$ may schedule up to $c_i$ units of data transfer from (or to) port $i$ at time slot $t$. For an input port $i$, let $load_F(i,t)$ denote the number of packets transferred from port $i$ at time $t$ as per configuration $F$. We similarly define $load_F(o,t)$ for an output port $o$. We only need to change constraints \eqref{eqn:LPinputconflict} and \eqref{eqn:LPoutputconflict} as follows.
\begin{align*}
& \forall i \in [m] \textnormal{ and } \forall t \in [T], && \displaystyle \sum_{j \in J} \sum_{F \in \mathcal{F}(j)} load_F(i,t)\ y_j^F \leq c_i \\
  & \forall o \in [m] \textnormal{ and } \forall t \in [T], && \displaystyle \sum_{j \in J} \sum_{F \in \mathcal{F}(j)} load_F(o,t)\ y_j^F \leq c_o
\end{align*}

The separation oracle for the dual LP once again relies on finding the minimum cost schedule for a job $j$ subject to a fixed deadline. Similar to Section \ref{sec:oracle}, this key problem can be solved using minimum cost flow. Since the reduction is almost identical to the one earlier, we skip the details.

\subsection{Rounding}
The key difference in the rounding step occurs in the construction of the continuous schedule $\sigma$. The fractional schedule obtained from the LP relaxation no longer schedules a fractional matching during a time slot $t$. Let $X^t$ be the bipartite graph scheduled by the feasible LP solution at time slot $t$. We are guaranteed that the fractional degree of each vertex $i \in X^t$ is at most $c_i$. To construct the continuous schedule, we need to decompose $X^t$ into a convex combination of a polynomial number of degree bounded subgraphs $B_1, B_2, \ldots, B_\ell$. Such a decomposition can be obtained by splitting a vertex $i \in X^t$ into $c_i$ distinct copies and splitting the edge weights so that each vertex has at most unit fractional degree in the expanded graph. The Birkhoff-von Neumann decomposition of the expanded graph then gives us the desired decomposition.

Once we have the continuous time schedule $\sigma$, the rest of the rounding procedure remains identical to that in Sections \ref{sec:rounding1} and Sections \ref{sec:rounding2}.

\bibliographystyle{plain}
\bibliography{coflow}

\newpage

\appendix

\end{document}